\documentclass[a4paper]{article}

\usepackage{amsmath,amsfonts,amssymb,amsthm,graphicx,graphics,epsf}
\setcounter{tocdepth}{3}
\usepackage{graphicx}
\usepackage[numbers]{natbib}

\usepackage{algorithmic}
\usepackage{algorithm}
\usepackage{hyperref,color}
\usepackage{xspace}
\usepackage{lineno}

\usepackage{geometry}

\newtheorem{theorem}{Theorem}[section]

\newtheorem{obs}[theorem]{Observation}

\newtheorem{question}[theorem]{Question}
\newtheorem{conj}[theorem]{Conjecture}

\theoremstyle{remark}
\newtheorem*{remark}{Remark}

\title{Sharing a pizza: bisecting masses with two cuts}

\author{Luis Barba\thanks{Department of Computer Science,
        ETH Z\"{u}rich, Switzerland. {\tt \{luis.barba, alexander.pilz, patrick.schnider\}@inf.ethz.ch}}\ \ \thanks{Partially supported by the ETH Postdoctoral Fellowship}
        \and
        Alexander Pilz$^*$\ \ \thanks{Supported by an Erwin Schr\"odinger fellowship of the Austrian Science 
Fund (FWF): J-3847-N35}
        \and
        Patrick Schnider$^*$}

\begin{document}
\maketitle
\begin{abstract}
Assume you have a pizza consisting of four ingredients (e.g., bread, tomatoes, cheese and olives) that you want to share with your friend. You want to do this fairly, meaning that you and your friend should get the same amount of each ingredient. How many times do you need to cut the pizza so that this is possible? We will show that two straight cuts always suffice. More formally, we will show the following extension of the well-known Ham-sandwich theorem: Given four mass distributions in the plane, they can be simultaneously bisected with two lines. That is, there exist two oriented lines with the following property: let $R^+_1$ be the region of the plane that lies to the positive side of both lines and let $R^+_2$ be the region of the plane that lies to the negative side of both lines. Then $R^+=R^+_1\cup R^+_2$ contains exactly half of each mass distribution.
\end{abstract}

\section{Introduction}

The famous \emph{Ham-sandwich theorem} (see e.g.\ \cite{Matousek, StoneTukey}) states that any $d$ mass distributions in $\mathbb{R}^d$ can be simultaneously bisected by a hyperplane. In particular, a two-dimensional sandwich consisting of bread and ham can be cut with one straight cut in such a way that each side of the cut contains exactly half of the bread and half of the ham. However, if two people want to share a pizza, this result will not help them too much, as pizzas generally consist of more than two ingredients. There are two options to overcome this issue: either they don't use a straight cut, but cut along some more complicated curve, or they cut the pizza more than once. In this paper we investigate the latter option. In particular we show that a pizza with four ingredients can always be shared fairly using two straight cuts. See Figure \ref{Fig:Pizza} for an example.

\begin{figure}
\label{Fig:Pizza}
\centering 
\includegraphics{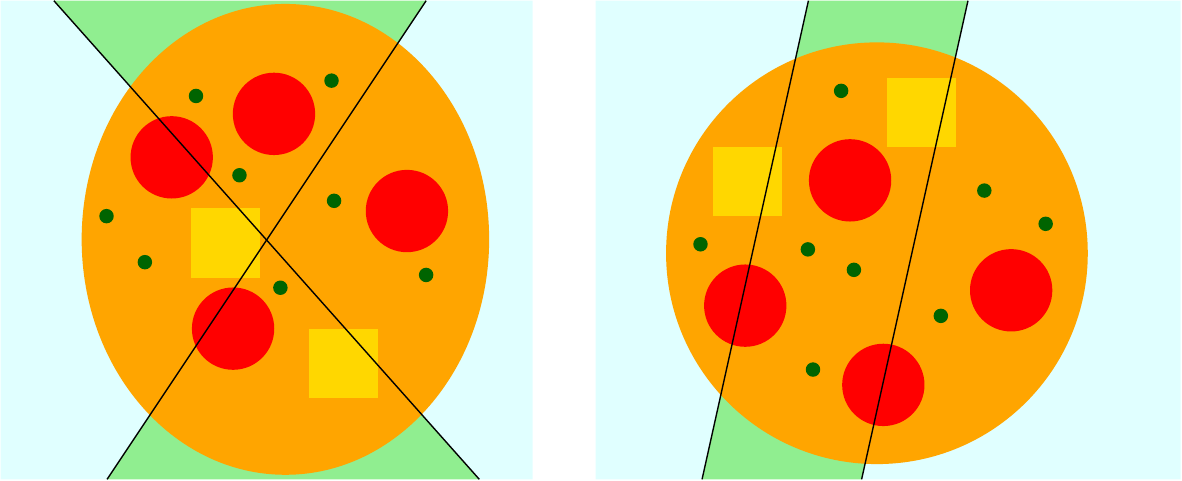}
\caption{Sharing a (not necessarily round) pizza fairly with two cuts. One person gets the parts in the light blue region, the other person gets the parts in the green region.}
\end{figure}

To phrase it in mathematical terms, we show that four mass distributions in the plane can be simultaneously bisected with two lines. A precise definition of what bisecting with $n$ lines means is given in the Preliminaries. This main result is proven in Section \ref{Sec:2Cuts}. In Section \ref{Sec:Restrictions} we add more restrictions on the lines.
In Section \ref{Sec:Algo} we give algorithms for all our results.
Finally, in Section \ref{Sec:General} we look at the general case of bisecting mass distributions in $\mathbb{R}^d$ with $n$ hyperplanes, and show an upper bound of $nd$ mass distributions that can be simultaneously bisected this way. We conjecture that this bound is tight, that is, that any $nd$ mass distributions in $\mathbb{R}^d$ can be simultaneously bisected with $n$ hyperplanes. For $d=1$, this is the well-known \emph{Necklace splitting problem}, for which an affirmative answer to our conjecture is known \cite{Hobby, Matousek}. So, our general problem can be seen as both a generalization of the Ham-sandwich theorem for more than one hyperplane, as well as a generalization of the Necklace splitting problem to higher dimensions.

Further, our results add to a long list of results about partitions of mass distributions, starting with the already mentioned Ham-sandwich theorem. A generalization of this is the polynomial Ham-sandwich theorem, which states that any $\binom{n+d}{d}-1$ mass distributions in $\mathbb{R}^d$ can be simultaneously bisected by an algebraic surface of degree $n$ \cite{StoneTukey}. Applied to the problem of sharing a pizza, this result gives an answer on how complicated the cut needs to be, if we want to use only a single (possibly self-intersecting) cut.

While bisections with several lines were studied by several authors, this particular question was to our knowledge first introduced by Bereg et al \cite{Bereg}, who showed that three point sets can always be simultaneously bisected with two lines. 
In this paper, we provide a substantial strengthening of their result in two ways: (1) instead of point sets, we generalize the results to work with mass distributions; and (2) we show that, in fact, a fourth mass distribution can also be simultaneously bisected (Section \ref{Sec:2Cuts}), or we can use this extra degree of freedom to put more restrictions on the bisecting lines (Section \ref{Sec:Restrictions}).
For example, we can find a bisection of three mass distributions with two lines, where one of the lines is required to pass through a given point in the plane, or it is required to be parallel to a given line. 

Several results are also known about equipartitions of mass distributions into more than two parts. A straightforward application of the 2-dimensional Ham-sandwich theorem is that any mass distribution in the plane can be partitioned into four equal parts with 2 lines. It is also possible to partition a mass distribution in $\mathbb{R}^3$ into 8 equal parts with three planes, but for $d\geq 5$, it is not always possible to partition a mass distribution into $2^d$ equal parts using $d$ hyperplanes \cite{Edelsbrunner}. The case $d=4$ is still open. A result by Buck and Buck \cite{Buck} states that a mass distribution in the plane can be partitioned into 6 equal parts by 3 lines passing through a common point. Several results are known about equipartitions in the plane with \emph{k-fans}, i.e., $k$ rays emanating from a common point. Note that 3 lines going through a common point can be viewed as a 6-fan, thus the previously mentioned result shows that any mass partition in the plane can be equipartitioned by a 6-fan. Motivated by a question posed by Kaneko and Kano \cite{Kaneko}, several authors have shown independently that 2 mass distributions in the plane can be simultaneously partitioned into 3 equal parts by a 3-fan \cite{Kirkpatrick, Ito, Sakai}. The analogous result for 4-fans holds as well \cite{Barany}. Partitions into non-equal parts have also been studied \cite{Zivaljevic}. All these results give a very clear description of the sets used for the partitions. If we allow for more freedom, much more is possible. In particular, Sober{\'o}n \cite{Soberon} and Karasev \cite{Karasev} have recently shown independently that any $d$ mass distributions in $\mathbb{R}^d$ can be simultaneously equipartitioned into $k$ equal parts by $k$ convex sets. The proofs of all of the above mentioned results rely on topological methods, many of them on the famous Borsuk-Ulam theorem and generalizations of it. For a deeper overview of these types of arguments, we refer to Matou{\v{s}}ek's excellent book \cite{Matousek}.

\subsection*{Preliminaries}
A \emph{mass distribution} $\mu$ on $\mathbb{R}^d$ is a measure on $\mathbb{R}^d$ such that all open subsets of $\mathbb{R}^d$ are measurable, $0<\mu(\mathbb{R}^d)<\infty$ and $\mu(S)=0$ for every lower-dimensional subset $S$ of $\mathbb{R}^d$.
Let $\mathcal{L}$ be a set of oriented hyperplanes. For each $\ell\in\mathcal{L}$, let $\ell^+$ and $\ell^-$ denote the positive and negative side of $\ell$, respectively (we consider the sign resulting from the evaluation of a point in these sets into the linear equation defining $\ell$). We explicitly allow hyperplanes at infinity, for which $\ell^+$ and $\ell^-$ are all of $\mathbb{R}^d$ and the empty set, respectively. For every point $p\in\mathbb{R}^d$, define $\lambda(p):=|\{\ell\in\mathcal{L}\mid p\in\ell^+\}|$ as the number of hyperplanes that have $p$ in their positive side. Let $R^+:=\{p\in\mathbb{R}^d\mid \lambda(p) \text{ is even}\}$ and $R^-:=\{p\in\mathbb{R}^d\mid \lambda(p) \text{ is odd}\}$. We say that $\mathcal{L}$ \emph{bisects} a mass distribution $\mu$ if $\mu(R^+)=\mu(R^-)$. For a family of mass distributions $\mu_1,\ldots,\mu_k$ we say that $\mathcal{L}$ \emph{simultaneously bisects} $\mu_1,\ldots,\mu_k$ if $\mu_i(R^+)=\mu_i(R^-)$ for all $i\in\{1,\ldots,k\}$.

More intuitively, this definition can also be understood the following way: if $C$ is a cell in the hyperplane arrangement induced by $\mathcal{L}$ and $C'$ is another cell sharing a facet with $C$, then $C$ is a part of $R^+$ if and only if $C'$ is a part of $R^-$. See Figure \ref{Fig:Checkers} for an example.

\begin{figure}
\centering 
\includegraphics{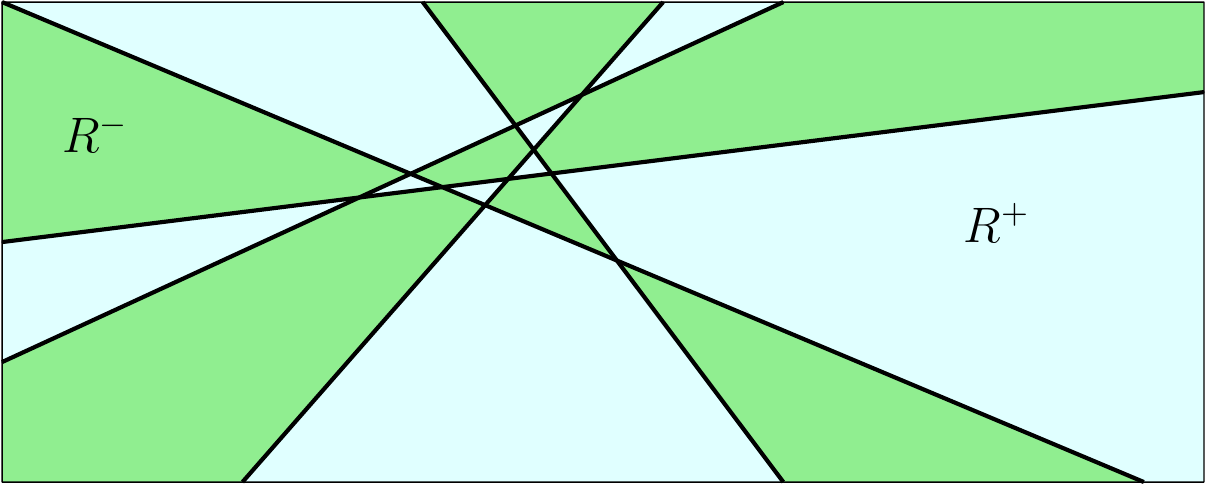}
\caption{The regions $R^+$ (light blue) and $R^-$ (green).}
\label{Fig:Checkers}
\end{figure}

Let $g_i(x):=a_{i,1}x_1+\ldots+a_{i,d}x_d+a_{i,0}\geq 0$ be the linear equation describing $\ell_i^+$ for $\ell_i\in\mathcal{L}$. Note that a hyperplane at infinity is described by an equation of the form $g_i(x)=a_{i,0}\geq 0$. The following is yet another way to describe $R^+$ and $R^-$: a point $p\in\mathbb{R}^d$ is in $R^+$ if $\prod_{\ell_i\in\mathcal{L}}g_i(p)\geq 0$ and it is in $R^-$ if $\prod_{\ell_i\in\mathcal{L}}g_i(p)\leq 0$. That is, if we consider the union of the hyperplanes in $\mathcal{L}$ as an oriented algebraic surface of degree $|\mathcal{L}|$, then $R^+$ is the positive side of this surface and $R^-$ is the negative side.

Note that reorienting one line just maps $R^+$ to $R^-$ and vice versa. In particular, if a set $\mathcal{L}$ of oriented hyperplanes simultaneously bisects a family of mass distributions $\mu_1,\ldots,\mu_k$, then so does any set $\mathcal{L'}$ of the same hyperplanes with possibly different orientations. Thus we can ignore the orientations and say that a set $\mathcal{L}$ of (undirected) hyperplanes simultaneously bisects a family of mass distributions if some orientation of the hyperplanes does.

\section{Two Cuts}
\label{Sec:2Cuts}

In this section we will look at simultaneous bisections with two lines in $\mathbb{R}^2$ and with two planes in $\mathbb{R}^3$. Both proofs rely on the famous Borsuk-Ulam theorem \cite{Borsuk}, which we will use in the version of \emph{antipodal mappings}. An antipodal mapping is a continuous mapping $f: S^d\rightarrow\mathbb{R}^d$ such that $f(-x)=-f(x)$ for all $x\in S^d$.

\begin{theorem}[Borsuk-Ulam theorem~\cite{Matousek}]
For every antipodal mapping $f: S^d\rightarrow\mathbb{R}^d$ there exists a point $x\in S^d$ satisfying $f(x)=0$.
\end{theorem}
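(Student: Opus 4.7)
The plan is to prove the theorem by contradiction and then to reduce it to a classical nonexistence statement. Suppose some continuous antipodal $f : S^d \to \mathbb{R}^d$ omits the origin; then $g(x) := f(x)/\|f(x)\|$ is a continuous antipodal map from $S^d$ into $S^{d-1}$, and it suffices to show that no such $g$ exists. This reduction turns a "there must be a zero" statement into the purely topological statement "there is no antipodal map $S^d \to S^{d-1}$", which I find cleaner to attack.

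I would proceed by induction on $d$. The base case $d=1$ is immediate from the intermediate value theorem: any continuous antipodal $S^1 \to \mathbb{R}$ takes values of opposite sign at antipodal points and hence vanishes somewhere. For the inductive step, restrict the hypothetical $g$ to the equatorial sphere $S^{d-1} \subset S^d$; the restriction $g|_{S^{d-1}} : S^{d-1} \to S^{d-1}$ is again antipodal. On the one hand it extends continuously to the closed upper hemisphere via $g$ itself, so it is null-homotopic and has degree zero. On the other hand, any continuous antipodal self-map of $S^{d-1}$ has odd degree, hence nonzero. These two facts are incompatible, which yields the contradiction.

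The main obstacle, and the only genuinely topological part of the argument, is justifying the odd-degree claim for antipodal self-maps of $S^{d-1}$. I would establish it by simplicial approximation on a sufficiently fine centrally symmetric triangulation of $S^{d-1}$, together with a parity count of signed preimages of a regular value, using the central symmetry of the triangulation to force the count to be odd. A cleaner combinatorial packaging is Tucker's lemma: any antipodal vertex labeling of a centrally symmetric triangulation of the closed $d$-ball with labels in $\{\pm 1, \ldots, \pm d\}$ contains a pair of adjacent vertices with opposite labels. Given Tucker's lemma, one can alternatively bypass degree theory entirely: discretize a hypothetical antipodal $g : S^d \to S^{d-1}$ on a fine centrally symmetric triangulation of $S^d$, label each vertex by the dominant signed coordinate of its image, and use uniform continuity of $g$ to see that the adjacent opposite labels produced by Tucker's lemma are impossible. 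Either route works, but I would favour the Tucker's lemma route because it isolates the combinatorial heart of the problem and is itself provable by a self-contained parity argument on paths in the triangulation.
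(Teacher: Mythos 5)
The paper does not prove the Borsuk--Ulam theorem; it states it as a cited classical result (from Matou{\v{s}}ek's book) and uses it as a black box, so there is no in-paper proof to compare your attempt against. That said, your outline is a sound sketch of a standard proof. The reduction to the nonexistence of a continuous antipodal map $S^d\to S^{d-1}$ via normalization is correct; the degree argument (the hemisphere extension makes the equatorial restriction null-homotopic, hence of degree zero, contradicting the odd-degree theorem for antipodal self-maps of $S^{d-1}$) is a standard route; and the Tucker's-lemma discretization you sketch at the end is the cleanest variant and is essentially the proof that appears in the cited reference. One framing quibble: you announce ``induction on $d$,'' but your inductive step never invokes the Borsuk--Ulam statement in dimension $d-1$; it invokes the odd-degree lemma for $S^{d-1}$, which you then propose to establish by a separate simplicial/parity argument or by Tucker's lemma. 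So the structure is really ``reduction plus a self-contained topological or combinatorial lemma,'' not genuine induction, and the $d=1$ case is just a direct instance handled by the intermediate value theorem rather than a base case that feeds into later steps.
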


The proof of the Ham-sandwich theorem can be derived from the Borsuk-Ulam theorem in the following way (see, e.g.,~\cite{Matousek}).
Let $\mu_1$ and $\mu_2$ be two mass distributions in $\mathbb{R}^2$.
For a point $p = (a, b, c) \in S^2$, consider the equation of the line $ax + by + c = 0$ and note that it defines a line in the plane parameterized by the coordinates of $p$.
Moreover, it splits the plane into two regions, the set $R^+(p) = \{(x,y)\in \mathbb{R}^2 : ax + by + c \geq 0\}$ and the set $R^-(p) = \{(x,y)\in \mathbb{R}^2 : ax + by + c \leq 0\}$.
Thus, we can define two functions $f_i := \mu_i(R^+(p)) - \mu_i(R^-(p))$ that together yield a function $f:S^2 \to \mathbb{R}^2$ that is continuous and antipodal.
Thus, by the Borsuk-Ulam theorem, there is a point $p = (a, b, c) \in S^2$, such that $f_i(-p) = -f_i(p)$ for $i\in \{1, 2\}$, which implies that the line $ax + by + c = 0$ defined by $p$ is a Ham-sandwich cut.
In this paper, we use variants of this proof idea to obtain simultaneous bisections by geometric objects that are parameterized by points in $S^d$. 
The main difference is that we replace some of the $f_i$'s by other functions, whose vanishing enforces specific structural properties on the resulting bisecting object.
We are now ready to prove our first main result:

\begin{theorem}
\label{Thm:2Lines}
Let $\mu_1,\mu_2,\mu_3,\mu_4$ be four mass distributions in $\mathbb{R}^2$. Then there exist two lines $\ell_1,\ell_2$ such that $\{\ell_1,\ell_2\}$ simultaneously bisects $\mu_1,\mu_2,\mu_3,\mu_4$.
\end{theorem}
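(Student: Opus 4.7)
The plan is to generalize the Borsuk-Ulam proof of the Ham-sandwich theorem to two lines. I parameterize pairs of oriented lines by $X := S^2 \times S^2 \subset \mathbb{R}^3 \times \mathbb{R}^3$: a point $(p, q)$ with $p = (a_1, b_1, c_1)$ and $q = (a_2, b_2, c_2)$ determines the oriented lines $\ell_p := \{a_1 x + b_1 y + c_1 = 0\}$ and $\ell_q := \{a_2 x + b_2 y + c_2 = 0\}$, with positive sides given by the corresponding non-strict inequalities, and with the associated regions $R^+, R^-$ as in the preliminaries. I then define the continuous map
\begin{align*}
F : X \to \mathbb{R}^4, \qquad F_i(p, q) := \mu_i(R^+) - \mu_i(R^-), \qquad i = 1, 2, 3, 4,
\end{align*}
so that a zero of $F$ is exactly a simultaneously bisecting pair of lines.

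Next I record the symmetries of $F$. By the checkerboard description of $R^+$, reversing the orientation of a single line exchanges $R^+$ and $R^-$, while reversing both leaves them invariant. Therefore
\begin{align*}
F(-p, q) = F(p, -q) = -F(p, q), \qquad F(-p, -q) = F(p, q),
\end{align*}
so $F$ is $\mathbb{Z}_2 \times \mathbb{Z}_2$-equivariant with each generator acting by $-1$ on the target. Since $\dim X = 4 = \dim \mathbb{R}^4$, this is the natural setup for a Borsuk-Ulam-type argument to force a zero.

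The central step, and the main obstacle, is the topological lemma asserting that every continuous $F$ with the above equivariance must vanish somewhere. My preferred route is to reduce to the classical Borsuk-Ulam theorem on $S^4$. I would construct a continuous $\mathbb{Z}_2$-equivariant map $\phi : S^4 \to X$, where $\mathbb{Z}_2$ acts antipodally on $S^4$ and by $(p, q) \mapsto (-p, q)$ on $X$; then $F \circ \phi : S^4 \to \mathbb{R}^4$ is antipodal, has a zero by Borsuk-Ulam, and pulling back through $\phi$ yields a zero of $F$. Concretely, $\phi$ would send $(a, b, c, d, e) \in S^4$ to a pair of lines in which one component depends on the coordinates in an odd way (so that antipode flips its orientation) and the other in an even way (so that antipode leaves it fixed). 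Designing $\phi$ so that it is continuous on all of $S^4$, has the required parity, and is rich enough that its image intersects the zero set of $F$ non-trivially is the technical core; everything else—continuity of $F$, verification of the equivariance, and reading off the bisection from a zero—is routine.
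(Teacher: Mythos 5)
Your parameterization of line pairs by $S^2\times S^2$ leads to a genuine gap: the topological lemma you need is false. You want every continuous $F:S^2\times S^2\to\mathbb{R}^4$ satisfying $F(-p,q)=F(p,-q)=-F(p,q)$ to vanish somewhere. But consider
\[
F(p,q) \;=\; \bigl(\, p\cdot q,\ \ p_2q_3-p_3q_2,\ \ p_3q_1-p_1q_3,\ \ p_1q_2-p_2q_1 \,\bigr),
\]
i.e.\ the dot product together with the three components of the cross product. Each coordinate is bilinear, so $F$ has exactly the equivariance you require; yet $\|p\times q\|^2+(p\cdot q)^2=\|p\|^2\|q\|^2=1$ on $S^2\times S^2$, so $F$ is nowhere zero. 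Hence no Borsuk--Ulam-type statement for this action on $S^2\times S^2$ and target $\mathbb{R}^4$ can hold, and the ``central step'' of your argument cannot be proved because it is false.

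Your proposed repair---pulling back through a $\mathbb{Z}_2$-equivariant $\phi:S^4\to S^2\times S^2$ with $\phi(-x)=(-\phi_1(x),\phi_2(x))$---also cannot work. The first component $\phi_1:S^4\to S^2$ would have to be an odd (antipodal) map from $S^4$ to $S^2$, and the Borsuk--Ulam theorem itself rules out continuous antipodal maps $S^n\to S^m$ for $n>m$. The same obstruction appears for any choice of $\mathbb{Z}_2\subset\mathbb{Z}_2\times\mathbb{Z}_2$ you might pull back along. This is precisely why the paper does not parameterize the two lines independently: instead it parameterizes conics by a single point of $S^5$ (the six coefficients of a bivariate quadratic), uses four antipodal test functions for the masses, and adds a fifth antipodal function---the $3\times 3$ determinant whose vanishing characterizes degenerate conics---to force the conic to split into two lines. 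Applying the ordinary Borsuk--Ulam theorem to the resulting map $S^5\to\mathbb{R}^5$ then does the job, sidestepping both of the obstructions above.
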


\begin{proof}
For each $p=(a,b,c,d,e,g)\in S^5$ consider the bivariate polynomial $c_p(x,y)=ax^2+by^2+cxy+dx+ey+g$. Note that $c_p(x,y)=0$ defines a conic section in the plane. Let $R^+(p):=\{(x,y)\in\mathbb{R}^2\mid c_p(x,y)\geq 0\}$ be the set of points that lie on the positive side of the conic section and let $R^-(p):=\{(x,y)\in\mathbb{R}^2\mid c_p(x,y)\leq 0\}$ be the set of points that lie on its negative side. Note that for $p=(0,0,0,0,0,1)$ we have $R^+(p)=\mathbb{R}^2$ and $R^-(p)=\emptyset$, and vice versa for $p=(0,0,0,0,0,-1)$. Also note that $R^+(-p)=R^-(p)$. We now define four functions $f_i:S^5\rightarrow\mathbb{R}$ as follows: for each $i\in\{1,\ldots,4\}$ define $f_i:=\mu_i(R^+(p))-\mu_i(R^-(p))$. From the previous observation it follows immediately that $f_i(-p)=-f_i(p)$ for all $i\in\{1,\ldots,4\}$ and $p\in S^5$. It can also be shown that the functions are continuous, but for the sake of readability we postpone this step to the end of the proof. Further let
\[
A(p):=\det
 \begin{pmatrix}
  a & c/2 & d/2 \\
  c/2& b & e/2 \\
  d/2 & e/2 & g
 \end{pmatrix}.
\]
It is well-known that the conic section $c_p(x,y)=0$ is degenerate if and only if $A(p)=0$. Furthermore, being a determinant of a $3\times 3$-matrix, $A$ is continuous and $A(-p)=-A(p)$. Hence, setting $f_5(p):=A(p)$, $f:=(f_1,\ldots,f_5)$ is an antipodal mapping from $S^5$ to $\mathbb{R}^5$, and thus by the Borsuk-Ulam theorem, there exists $p^*$ such that $f(p^*)=0$. For each $i\in\{1,\ldots,4\}$ the condition $f_i(p^*)=0$ implies by definition that $\mu_i(R^+(p^*))=\mu_i(R^-(p^*))$. The condition $f_5(p^*)=0$ implies that $c_p(x,y)=0$ describes a degenerate conic section, i.e., two lines (possibly one of them at infinity), a single line of multiplicity 2, a single point or the empty set. For the latter three cases, we would have $R^+(p^*)=\mathbb{R}^2$ and $R^-(p^*)=\emptyset$ or vice versa, which would contradict $\mu_i(R^+(p^*))=\mu_i(R^-(p^*))$. Thus $f(p^*)=0$ implies that $c_p(x,y)=0$ indeed describes two lines that simultaneously bisect $\mu_1,\mu_2,\mu_3,\mu_4$.

It remains to show that $f_i$ is continuous for $i\in\{1,\ldots,4\}$. To that end, we will show that $\mu_i(R^+(p))$ is continuous. The same arguments apply to $\mu_i(R^-(p))$, which then shows that $f_i$, being the difference of two continuous functions, is continuous. So let $(p_n)_{n=1}^{\infty}$ be a sequence of points in $S^5$ converging to $p$. We need to show that $\mu_i(R^+(p_n))$ converges to $\mu_i(R^+(p))$. If a point $q$ is not on the boundary of $R^+(p)$, then for all $n$ large enough we have $q\in R^+(p_n)$ if and only if $q\in R^+(p)$. As the boundary of $R^+(p)$ has dimension 1 and $\mu_i$ is a mass distribution we have $\mu_i(\partial R^+(p))=0$ and thus $\mu_i(R^+(p_n))$ converges to $\mu_i(R^+(p))$ as required.
\end{proof}

\begin{remark}
In the conference version of this paper \cite{conference}, a similar result in $\mathbb{R}^3$ was also claimed. However, there was an error in the proof.
\end{remark}

We note that our setting allows for a consistent formulation in the real projective plane.
Indeed, we required the line at infinity as a special case.
Further, while a single line does not bisect the projective plane, it is bisected by two lines in exactly the two regions that we considered.
However, as it is convenient to have a notion of orientation of a separator, our presentation will use the affine plane and treat the line at infinity as a special case.

\section{Putting more restrictions on the cuts}
\label{Sec:Restrictions}

In this section, we look again at bisections with two lines in the plane. 
However, we enforce additional conditions on the lines, at the expense of reducing the number of bisected mass distributions.

\begin{theorem}
\label{Thm:Parallel}
Let $\mu_1,\mu_2,\mu_3$ be three mass distributions in $\mathbb{R}^2$. Given any line $\ell$ in the plane, there exist two lines $\ell_1,\ell_2$ such that $\{\ell_1,\ell_2\}$ simultaneously bisects $\mu_1,\mu_2,\mu_3$ and $\ell_1$ is parallel to~$\ell$.
\end{theorem}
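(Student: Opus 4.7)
After an affine change of coordinates I may assume $\ell$ is the $x$-axis, so ``parallel to $\ell$'' means horizontal. The plan is to imitate the proof of Theorem~\ref{Thm:2Lines}, but to parameterize only those conics whose defining polynomial has no $x^2$ term. In any real factorization of such a polynomial into two linear factors, at least one factor must have vanishing $x$-coefficient, that is, it must define a horizontal line (or, degenerately, the line at infinity); this is precisely the condition we want to impose on $\ell_1$.

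Concretely, for $p=(b,c,d,e,g)\in S^4$ I set $c_p(x,y):=by^2+cxy+dx+ey+g$ and $R^\pm(p):=\{(x,y):\pm c_p(x,y)\geq 0\}$. Exactly as in the proof of Theorem~\ref{Thm:2Lines}, the functions $f_i(p):=\mu_i(R^+(p))-\mu_i(R^-(p))$ for $i\in\{1,2,3\}$ are continuous and antipodal. For the fourth coordinate I take
\[
f_4(p):=\det\begin{pmatrix}0 & c/2 & d/2\\ c/2 & b & e/2\\ d/2 & e/2 & g\end{pmatrix},
\]
a cubic polynomial in $p$, hence continuous and antipodal, which vanishes exactly when the conic $c_p=0$ is degenerate. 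Applying the Borsuk--Ulam theorem to $f=(f_1,f_2,f_3,f_4):S^4\to\mathbb{R}^4$ gives a point $p^*$ with $f(p^*)=0$, so the conic $c_{p^*}=0$ simultaneously bisects $\mu_1,\mu_2,\mu_3$ and is degenerate.

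It remains to extract the two required lines from $c_{p^*}$. If $c_{p^*}$ factors over $\mathbb{R}$ as $(\alpha_1 x+\beta_1 y+\gamma_1)(\alpha_2 x+\beta_2 y+\gamma_2)$, the absence of an $x^2$ term forces $\alpha_1\alpha_2=0$, so one factor is of the form $\beta y+\gamma$ and defines a horizontal line (a genuine real line if $\beta\neq 0$, the line at infinity if $\beta=0$). The remaining possibilities for a degenerate conic -- a double line, a single point, an empty real zero set, or a pair of complex conjugate factors -- all leave either $R^+(p^*)$ or $R^-(p^*)$ of full $\mu_i$-measure, contradicting the bisection of the positive mass distributions, exactly as in the proof of Theorem~\ref{Thm:2Lines}. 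Taking the horizontal factor as $\ell_1$ and the other factor as $\ell_2$ yields the conclusion.

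The step I expect to need the most care is precisely the borderline case in which the ``horizontal'' factor is the nonzero constant, so the conic reduces to a single affine line that already bisects all three masses on its own. This must either be absorbed into the convention already used in the paper -- that the line at infinity is a valid separator and may be oriented so as to be considered parallel to~$\ell$ -- or handled by a short perturbation argument: generically perturb $\mu_1,\mu_2,\mu_3$ so that no single line bisects all three, apply the above argument to obtain a genuine pair of lines with $\ell_1$ horizontal, and then pass to a limit using compactness of the space of pairs of lines in the projective plane.
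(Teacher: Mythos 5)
Your proof is correct and is essentially the paper's argument in a mildly repackaged form: instead of working on $S^5$ with all six conic coefficients and enforcing the absence of an $x^2$-term via an extra antipodal coordinate $f_4:=a$ (as the paper does), you drop that coefficient from the parameterization and apply Borsuk--Ulam on $S^4$ with the determinant restricted to $a=0$, which is equivalent because $S^4$ is exactly the equator $\{a=0\}\subset S^5$ and your $f_4$ is the restriction of the paper's $f_5$ to that equator. Your closing worry about the horizontal factor degenerating to a constant is already absorbed by the paper's explicit convention that the line at infinity is a valid separator and is parallel to every line, so no perturbation argument is needed.
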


We note here that a line at infinity is parallel to any other line.

\begin{proof}
Assume without loss of generality that $\ell$ is parallel to the $x$-axis; otherwise rotate $\mu_1,\mu_2,\mu_3$ and $\ell$ to achieve this property. Consider the conic section defined by the polynomial $ax^2+by^2+cxy+dx+ey+g$. If $a=0$ and the polynomial decomposes into linear factors, then one of the factors must be of the form $\beta y+\gamma$.
If $\beta=0$, then this factor corresponds to a line at infinity, which is by definition parallel to any line, thus also the $x$-axis.
Otherwise, the line defined by this factor is parallel to the $x$-axis. Thus, we can modify the proof of Theorem \ref{Thm:2Lines} in the following way: we define $f_1,f_2,f_3$ and $f_5$ as before, but set $f_4:=a$. It is clear that $f$ still is an antipodal mapping and thus has a zero.
As $f_5=0$, we know, like above, that the polynomial defining the conic section decomposes into linear factors, i.e., two lines.
Since $f_4=a=0$, we thus also know that one of the two corresponding lines is parallel to the $x$-axis.
Finally, $f_1=f_2=f_3=0$ implies that the two lines simultaneously bisect the three mass distributions, which proves the result.
\end{proof}

Another natural condition on a line is that it has to pass through a given point.

\begin{theorem}
\label{Thm:1throughPoint}
Let $\mu_1,\mu_2,\mu_3$ be three mass distributions in $\mathbb{R}^2$ and let $q$ be a point. Then there exist two lines $\ell_1,\ell_2$ such that $\{\ell_1,\ell_2\}$ simultaneously bisects $\mu_1,\mu_2,\mu_3$ and $\ell_1$ goes through $q$.
\end{theorem}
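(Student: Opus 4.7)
The plan is to mimic the proof of Theorem~\ref{Thm:2Lines}, keeping the parameterization of conic sections by $p=(a,b,c,d,e,g)\in S^5$ via $c_p(x,y)=ax^2+by^2+cxy+dx+ey+g$, together with the sets $R^+(p)$ and $R^-(p)$ defined there. The key idea is to replace one of the four mass-bisection coordinates by a new coordinate whose vanishing enforces that the conic passes through the point $q$. Concretely, I would define $f_i(p):=\mu_i(R^+(p))-\mu_i(R^-(p))$ for $i\in\{1,2,3\}$ (the three bisection conditions), keep $f_5(p):=A(p)$ from Theorem~\ref{Thm:2Lines} (the degeneracy condition), and introduce $f_4(p):=c_p(q)$, which vanishes exactly when $q$ lies on the conic $c_p(x,y)=0$.

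Each of these five maps is continuous on $S^5$: for $f_1,f_2,f_3$ this is shown by the boundary-measure-zero argument at the end of the proof of Theorem~\ref{Thm:2Lines}, and $f_4,f_5$ are polynomial in the coordinates of $p$. Antipodality is likewise immediate; in particular $c_{-p}(q)=-c_p(q)$, so $f_4(-p)=-f_4(p)$. Hence $f:=(f_1,\ldots,f_5)$ is a continuous antipodal map $S^5\to\mathbb{R}^5$, and the Borsuk-Ulam theorem yields $p^*\in S^5$ with $f(p^*)=0$.

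It remains to interpret $f(p^*)=0$. From $f_5(p^*)=0$ the conic $c_{p^*}=0$ is degenerate, and from $f_1(p^*)=f_2(p^*)=f_3(p^*)=0$ the corresponding partition $R^\pm(p^*)$ bisects each of $\mu_1,\mu_2,\mu_3$. As in the proof of Theorem~\ref{Thm:2Lines}, the possibilities that the degenerate conic is the empty set, a single point, or a single line of multiplicity two are ruled out by the bisection conditions, because in each of those cases one of $R^+(p^*),R^-(p^*)$ has measure zero while the other is all of $\mathbb{R}^2$ (up to measure zero), which is incompatible with bisecting a nonzero mass distribution. Thus $c_{p^*}$ factors into two distinct linear factors defining two lines $\ell_1,\ell_2$ (one possibly at infinity), and $f_4(p^*)=c_{p^*}(q)=0$ forces $q$ to lie on $\ell_1\cup\ell_2$, so after relabelling we may assume $q\in\ell_1$.

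The only delicate point is the ``double line through $q$'' scenario, where $f_4(p^*)=c_{p^*}(q)=0$ is satisfied trivially but we do not obtain two distinct bisecting lines; however, as noted above, this case is already excluded by the mass-bisection conditions, exactly as in Theorem~\ref{Thm:2Lines}, so no new obstacle appears relative to that proof.
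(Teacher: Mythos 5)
Your proof is correct and takes essentially the same route as the paper: you keep the $S^5$ conic parameterization, the three bisection functions, and the degeneracy determinant $A(p)$, and replace the fourth bisection condition by one that forces the conic through $q$. The paper first translates $q$ to the origin and then sets $f_4:=g$; your $f_4(p):=c_p(q)$ is exactly $g$ in those coordinates (and is manifestly linear in $p$, hence continuous and antipodal), so the two are the same argument with the WLOG translation folded into the definition of $f_4$.
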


\begin{proof}
Assume without loss of generality that $q$ coincides with the origin; otherwise translate $\mu_1,\mu_2,\mu_3$ and $q$ to achieve this. 
Consider  again the conic section defined by the polynomial $ax^2+by^2+cxy+dx+ey+g$. If $g=0$ and the polynomial decomposes into linear factors, then one of the factors must be of the form $\alpha x+\beta y$. In particular, the line defined by this factor goes through the origin. We can again modify the proof of Theorem \ref{Thm:2Lines} in the following way: we define $f_1,f_2,f_3$ and $f_5$ as before, but set $f_4:=g$. It is clear that $f$ still is an antipodal mapping. The zero of this mapping now implies the existence of two lines simultaneously bisecting three mass distributions, one of them going through the origin, which proves the result.
\end{proof}

Note that the above results complement each other when considering the projective plane:
Theorem~\ref{Thm:1throughPoint} makes $\ell_1$ pass through a point $q$ in the affine plane, and Theorem~\ref{Thm:Parallel} covers the case in which $q$ is on the line at infinity.

At the cost of another mass distribution, we can also enforce the intersection of the two lines to be at a given point.

\begin{theorem}
\label{Thm:2throughPoint}
Let $\mu_1,\mu_2$ be two mass distributions in $\mathbb{R}^2$ and let $q$ be a point. Then there exist two lines $\ell_1,\ell_2$ such that $\{\ell_1,\ell_2\}$ simultaneously bisects $\mu_1,\mu_2$, and both $\ell_1$ and $\ell_2$ go through $q$.
\end{theorem}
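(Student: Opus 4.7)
The plan is to adapt the proof of Theorem~\ref{Thm:2Lines} by recycling the conic-section parameterization $p=(a,b,c,d,e,g)\in S^5$ and tailoring the non-mass coordinates of the antipodal map to force both lines to pass through $q$. First I would translate so that $q$ coincides with the origin. The key algebraic observation is that the polynomial $c_p(x,y)=ax^2+by^2+cxy+dx+ey+g$ factors over $\mathbb{R}$ as a product of two linear forms, each vanishing at the origin, if and only if $d=e=g=0$: in that case $c_p$ is a homogeneous quadratic in $x$ and $y$, and conversely any product of two linear forms through the origin is such a homogeneous quadratic.

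Next I would define $f=(f_1,\dots,f_5)\colon S^5\to\mathbb{R}^5$ by $f_i(p):=\mu_i(R^+(p))-\mu_i(R^-(p))$ for $i\in\{1,2\}$, together with the three linear functions $f_3(p):=d$, $f_4(p):=e$, $f_5(p):=g$. The last three are trivially continuous and antipodal, and continuity of $f_1,f_2$ follows exactly as at the end of the proof of Theorem~\ref{Thm:2Lines}. A pleasant bonus of this choice is that the degeneracy condition $A(p)=0$ comes for free: with $d=e=g=0$ the last row of the defining $3\times 3$ matrix vanishes, so no separate $A$-coordinate is needed. Applying the Borsuk--Ulam theorem would then produce $p^*=(a^*,b^*,c^*,0,0,0)\in S^5$ with $f(p^*)=0$, and since $p^*$ has unit norm at least one of $a^*,b^*,c^*$ is nonzero.

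The one step requiring care -- and the only place I expect a genuine (minor) obstacle -- is ruling out the degenerate shapes of the homogeneous quadratic $a^*x^2+c^*xy+b^*y^2$ that are not two honest lines. If the discriminant $(c^*)^2-4a^*b^*$ is negative, the form is definite and vanishes only at the origin; if it is zero, the form is a perfect square and defines a double line. In both cases one of $R^+(p^*),R^-(p^*)$ is all of $\mathbb{R}^2$ while the other is contained in a set of $\mu_i$-measure zero, which contradicts $f_1(p^*)=f_2(p^*)=0$ since the $\mu_i$ are honest mass distributions. Hence the discriminant is strictly positive, $c_{p^*}$ factors as a product of two distinct linear forms through the origin, and the corresponding lines $\ell_1,\ell_2$ simultaneously bisect $\mu_1$ and $\mu_2$ by $f_1(p^*)=f_2(p^*)=0$; undoing the initial translation puts their common point at $q$, as required.
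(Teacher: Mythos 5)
Your proof is correct and follows essentially the same route as the paper: translate $q$ to the origin, set $f_3:=d$, $f_4:=e$, $f_5:=g$ in the conic parameterization, note that $A(p)$ then vanishes automatically, and apply Borsuk--Ulam. Your explicit discriminant analysis of the remaining degenerate homogeneous forms is a slightly more detailed version of the degeneracy argument that the paper delegates to the proof of Theorem~\ref{Thm:2Lines}.
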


\begin{proof}
Assume without loss of generality that $q$ coincides with the origin; otherwise translate $\mu_1,\mu_2$ and $q$ to achieve this.  
Consider the conic section defined by the polynomial $ax^2+by^2+cxy$, i.e., the conic section where $d=e=g=0$. If this conic section decomposes into linear factors, both of them must be of the form $\alpha x+\beta y=0$. In particular, both of them pass through the origin. Furthermore, as $d=e=g=0$, the determinant $A(p)$ vanishes, which implies that the conic section is degenerate. Thus, we can modify the proof of Theorem \ref{Thm:2Lines} in the following way: we define $f_1,f_2$ as before, but set $f_3:=d$, $f_4:=e$ and $f_5:=g$. It is clear that $f$ still is an antipodal mapping. The zero of this mapping now implies the existence of two lines simultaneously bisecting two mass distributions, both of them going through the origin, which proves the result.
\end{proof}

\section{Algorithmic aspects}
\label{Sec:Algo}
Going back to the planar case, instead of considering four mass distributions $\mu_1, \ldots, \mu_4$, one can think of having four finite sets of points $P_1, \ldots, P_4\subset \mathbb{R}^2$.
In this setting, our problem translates to finding two lines that simultaneously bisect these four point sets (that is, both (open) regions contain at most $\lfloor\frac{|P_i|}{2}\rfloor$ of the points of $P_i$). The existence of such a bisection follows from Theorem~\ref{Thm:2Lines} as we can always replace each point by a sufficiently small disk and consider their area as a mass distribution.

An interesting question is then to find efficient algorithms to compute such a bisection given any four sets $P_1, \ldots, P_4$ with a total of $n$ points.
For example, there exists a linear-time algorithm for Ham-sandwich cuts of two sets of points in~$\mathbb{R}^2$~\cite{lo1994algorithms}.
For the problem at hand, a trivial $O(n^5)$ time algorithm can be applied by looking at all pairs of combinatorially different lines. While this running time can be reduced using known data structures, it still goes through $\Theta(n^4)$ different pairs of lines. 
An algorithm that does not consider all combinatorially different pairs of lines is described in the proof of the following theorem.

\begin{theorem}
\label{Alg:2D}
Given any four planar point sets $P_1, \ldots, P_4$ with a total of $n$ points, one can find two lines $\ell_1,\ell_2$ such that $\{\ell_1,\ell_2\}$ simultaneously bisects $P_1, \ldots, P_4$ in $O(n^{\frac{10}{3}})$ time.
\end{theorem}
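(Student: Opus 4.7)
The plan is to cast the problem as a $4$-dimensional search for a common zero of four discrete ``signed discrepancy'' functions, and then to exploit $k$-level machinery to avoid enumerating all $\Theta(n^4)$ combinatorial pairs. Concretely, for a pair $(\ell_1,\ell_2)$ let $\sigma_j(p)\in\{+1,-1\}$ denote the side of $\ell_j$ on which $p$ lies. Then $\{\ell_1,\ell_2\}$ bisects $P_i$ exactly when
\[
\sum_{p\in P_i}\sigma_1(p)\,\sigma_2(p)=0,
\]
so I am looking for four sign-product sums to vanish simultaneously---four equations in the four parameters of the line pair, matching the isolated-solution count predicted by Theorem~\ref{Thm:2Lines}.

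The main algorithmic idea is to fix $\ell_1$ and to solve the residual subproblem for $\ell_2$. Assigning every point $p\in P_i$ the weight $w_i(p):=\sigma_1(p)$, the condition above is precisely that $\ell_2$ weight-bisects each of the four signed sets $(P_i,w_i)$. I would enumerate combinatorial classes of $\ell_1$ by a topological sweep of the dual plane of $\bigcup_i P_i$: the weight $w_i(p)$ flips only when $\ell_1$ crosses $p$, so consecutive classes differ by $O(1)$ updates, and there are $\Theta(n^2)$ classes in total.

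For each combinatorial class of $\ell_1$, the locus of dual points $\ell_2^*$ corresponding to lines $\ell_2$ that weight-bisect a given signed set is a weighted halving curve in the dual arrangement, whose complexity is $O(n^{4/3})$ by the Dey-type bound on $k$-levels. Using a multilevel partition tree maintained under weight flips (equivalently, a dynamic halving-edge data structure) one can test in $O(n^{4/3})$ amortized time per class whether the four such curves share a common point, and, if so, recover the corresponding $\ell_2$. Combining this with the $O(n^2)$ enumeration yields the claimed total bound of $O(n^2)\cdot O(n^{4/3})=O(n^{10/3})$.

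The main obstacle will be the design and amortized analysis of this dynamic data structure: maintaining four weighted halving levels as the sign assignment flips during the sweep, and efficiently detecting a common intersection of all four, requires a careful combination of the $k$-level/halving-edge machinery with standard topological-sweep bookkeeping. A secondary subtlety is to handle the degenerate configurations (lines through input points, parallel pairs, and the line at infinity arising in the proof of Theorem~\ref{Thm:2Lines}) via a symbolic perturbation, so that the sweep remains well-defined and the final pair returned is a genuine bisection rather than an artifact of a boundary case.
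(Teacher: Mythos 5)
Your reduction is conceptually on target: fixing $\ell_1$ and assigning weights $w_i(p)=\sigma_1(p)$ is exactly the effect of the projective transformation the paper uses (sending $\ell_1$ to the line at infinity flips one side's relation to $\ell_2$), and enumerating $O(n^2)$ combinatorial classes of $\ell_1$ matches the paper's enumeration of cells in the dual arrangement. Your sign-product identity $\sum_p\sigma_1(p)\sigma_2(p)=0$ is a clean way to state the bisection condition.

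However, the inner step --- solving for $\ell_2$ in $O(n^{4/3})$ per class --- is where the proposal has a genuine gap, and not merely a detail to be ``carefully analyzed.'' You propose to maintain a ``multilevel partition tree under weight flips'' and test whether four weighted halving curves share a common point. Two problems. First, the four curves live in four \emph{different} dual arrangements (one per $P_i$), and testing whether they have a common intersection is not a $k$-level query; it is a curve-intersection problem whose cost is not bounded by $O(n^{4/3})$ in any obvious way. Second, the object you should be walking is a single curve: the weighted middle level of the \emph{union}, since any $\ell_2$ that bisects each $P_i$ also bisects $\bigcup_i P_i$. Walking that one curve and maintaining the four individual above/below counts (which change by $\pm1$ per edge) decides feasibility in $O(\text{length of curve})=O(n^{4/3})$ time with no dynamic data structure. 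Crucially, the paper observes that all these weighted middle levels are paths in the \emph{same} precomputed $O(n^2)$-size dual arrangement (only the choice of ``north pole'' cell changes which edges belong to the level), so one $O(n^2)$ construction suffices for all $O(n^2)$ classes. Your proposal misses this reuse and instead invokes an unspecified dynamic structure whose update/query bounds you do not establish; as stated, the $O(n^{10/3})$ claim is not justified.

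To repair your write-up, drop the ``intersect four curves'' and ``dynamic partition tree'' framing: build the dual arrangement of $P_1\cup\cdots\cup P_4$ once in $O(n^2)$ time, and for each of the $O(n^2)$ cells $c$, treat $c$ as the cell containing the point at vertical infinity (i.e., the north pole) and walk the corresponding middle level, which is a single cycle in the projective view of the fixed arrangement of complexity $O(n^{4/3})$ by Dey's bound, maintaining the four per-set counts as you go. This yields the claimed $O(n^{10/3})$ bound directly.
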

\begin{proof}
We know from Theorem \ref{Thm:2Lines} that a solution exists. Given a solution, we can move one of the lines to infinity using a projective transformation. After this transformation, the remaining line simultaneously bisects the four transformed point sets. In other words, given any four planar point sets $P_1, \ldots, P_4$, we can always find a projective transformation $\phi$ such that $\phi(P_1), \ldots, \phi(P_4)$ can be simultaneously bisected by a single line.
Checking whether four point sets can be simultaneously bisected by a line can be done by first building the dual line arrangement of the union of the four sets in $O(n^2)$ time~\cite{dual_arr0,dual_arr1} (where a point $(a,b)$ is replaced by the line $y=ax+b$ and vice versa).
We can then walk along the middle level of the arrangement, keeping track of how many of the dual lines of each point set are above and below the middle level, which tells us whether somewhere along the middle level exactly half of the dual lines of every point set are above.
For the starting point of our walk, we count the number of dual lines above and below the middle level in linear time and every update only needs constant time.
Thus, the time needed after building the arrangement is bounded by the complexity of the middle level, which is at most $O(n^{\frac{4}{3}})$, as shown by Dey~\cite{Dey}.
The choice of the line at infinity for a projective transformation of a point set corresponds to choosing the north pole (i.e., the point at vertical infinity, which is dual to the line at infinity) in the dual.
The north pole is contained in one of the $O(n^2)$ cells of the dual arrangement.
So in order to check for every possible projective transformation~$\phi$ whether $\phi(P_1), \ldots, \phi(P_4)$ can be simultaneously bisected by a line, it suffices to build the dual arrangement once;
after that, we can check whether $\phi(P_1), \ldots, \phi(P_4)$ can be simultaneously bisected by a line for every combinatorially different choice of the line at infinity in time $O(n^{\frac{4}{3}})$ per choice.
See \figurename~\ref{fig:middle_levels}.
As there are $O(n^2)$ combinatorially different choices for the line at infinity of a projective transformations (i.e., cells in the dual arrangement), the running time of $O(n^{2+\frac{4}{3}})=O(n^{\frac{10}{3}})$ follows.
\end{proof}

\begin{figure}
\centering
\includegraphics{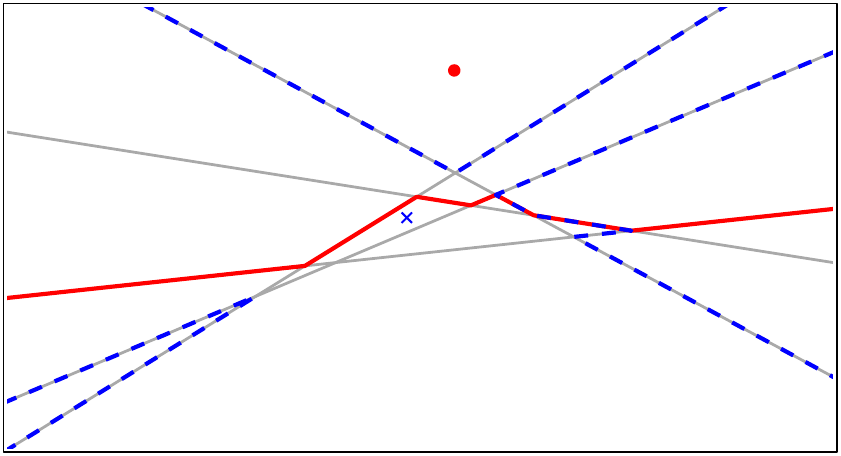}
\caption{
A dual line arrangement.
The red round dot marks the cell containing the point at vertical infinity, which results in the middle level indicated in bold red.
When choosing the blue cross as the north pole, the middle level is indicated by the dashed blue segments.
Note that these form a connected cycle in the projective plane, and that we can thus re-use the initially computed line arrangement.
}
\label{fig:middle_levels}
\end{figure}

\sloppypar{
The analysis of the above algorithm heavily depend on Dey's result~\cite{Dey} on the middle level in arrangements.
The current best lower bound on the complexity of the middle level is $\Omega(n \log n)$~\cite{level_lower_bound}.
Note that in the analysis of our algorithm we implicitly use an upper bound of $O(n^{\frac{10}{3}})$ for the complexity of all projectively different middle levels.
More formally, let $c$ be a cell in the dual line arrangement $\mathcal{A}$ and let $m(c)$ be the complexity of the middle level when the north pole lies in $c$.
Then $\sum_{c\in\mathcal{A}}m(c)$ is upper bounded by $O(n^{\frac{10}{3}})$.
However, this bound does not take into account that many of the considered middle levels could be significantly smaller than $O(n^{\frac{4}{3}})$. This gives rise to the following question.
}

\begin{question}\label{Question:projective_levels}
What is the total complexity $\sum_{c\in\mathcal{A}}m(c)$ of all projectively different middle levels?
\end{question}

Any improvement on the bound $O(n^{\frac{10}{3}})$ would immediately improve the bound of the running time of our algorithm.
Further, note that the idea used for the algorithm can also be used to get algorithms for Theorem \ref{Thm:Parallel} and Theorem \ref{Thm:1throughPoint}.

\begin{theorem}
\label{Alg:2DParallel}
Given any three planar point sets $P_1, P_2, P_3$ with a total of $n$ points and a line $\ell$, one can find two lines $\ell_1,\ell_2$ such that $\{\ell_1,\ell_2\}$ simultaneously bisects $P_1, \ldots, P_3$ and $\ell_1$ is parallel to $\ell$ in time $O(n^{\frac{7}{3}})$.
\end{theorem}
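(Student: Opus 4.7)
The plan is to adapt the algorithm of Theorem~\ref{Alg:2D}, exploiting the fact that the constraint on $\ell_1$ reduces the search space by a factor of $n$. The existence of a valid pair $\ell_1, \ell_2$ is guaranteed by Theorem~\ref{Thm:Parallel}, so I only need an efficient procedure to find one.

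I would first rotate the plane so that $\ell$ is horizontal, making the constraint on $\ell_1$ the requirement that $\ell_1$ be horizontal (or the line at infinity, which is parallel to every line and handled as a special case). Following the strategy of Theorem~\ref{Alg:2D}, I would then apply a projective transformation $\phi$ that sends $\ell_1$ to the line at infinity. Under this transformation, the constraint from Theorem~\ref{Thm:Parallel} becomes the requirement that a single line $\phi(\ell_2)$ simultaneously bisect the three sets $\phi(P_1), \phi(P_2), \phi(P_3)$.

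In the standard duality, such a projective transformation is determined by the choice of a \emph{north pole}, namely the dual of the primal line being sent to infinity. Since a horizontal primal line $y = c$ dualizes to $(0, c)$, the horizontality constraint restricts the north pole to the vertical line $V = \{x = 0\}$ in the dual plane. After building the dual line arrangement of the $n$ points of $P_1 \cup P_2 \cup P_3$ in $O(n^2)$ time, I would enumerate only the $O(n)$ cells of this arrangement that $V$ crosses (since $V$ meets at most $n$ dual lines, splitting it into $O(n)$ combinatorially distinct pieces), instead of the $O(n^2)$ cells considered in Theorem~\ref{Alg:2D}. For each such cell I would walk along the corresponding middle level and check in constant time per edge whether somewhere along it all three point sets are simultaneously split in half; by Dey's bound~\cite{Dey} each such walk costs $O(n^{4/3})$. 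The total running time would then be $O(n) \cdot O(n^{4/3}) = O(n^{7/3})$, which dominates the $O(n^2)$ preprocessing.

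I expect the main obstacle to be bookkeeping rather than conceptual: I need to ensure that the middle-level walks for consecutive north-pole cells along $V$ can be initialized in amortized time consistent with the overall budget, and that boundary cases (the north pole crossing a dual line, or lying at infinity, which corresponds to $\ell_1$ being the line at infinity) are handled correctly. These mirror the corresponding corner cases already present in the proof of Theorem~\ref{Alg:2D} and should not require new ideas.
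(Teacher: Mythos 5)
Your proposal takes essentially the same route as the paper: dualize, restrict the north pole to the vertical line in the dual arising from the parallelism constraint (your preliminary rotation just normalizes this line to $x=0$), enumerate the $O(n)$ cells it crosses, and bound each middle-level walk by Dey's $O(n^{4/3})$. The paper additionally invokes the Zone theorem to traverse the cells along $v$ in total $O(n)$ time, which formalizes the bookkeeping concern you flag at the end.
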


\begin{proof}
We know from Theorem \ref{Thm:Parallel} that a solution exists, in which we again may move one line to infinity, namely $\ell_1$.
The duals of the family of lines that are parallel to $\ell$ defines a family of points that are exactly the points on a vertical line~$v$ in the dual, which passes through the dual point $\ell^*$ of the line~$\ell$.
This means that, by fixing $\ell_1$, we place the north pole in a cell intersected by the line~$v$.
As in the previous proof, we consider combinatorially different placements of $\ell_1$ and walk through the respective middle level.
However, the line~$v$ intersects the interior of only $n$ cells,
so we only have to walk along a linear number of middle levels in order to find a solution.
(By the Zone theorem~\cite{ZoneThm}, the cells containing $v$ can be traversed in total~$O(n)$ time.)
This implies the runtime of $O(n^{\frac{7}{3}})$.
\end{proof}

While for Theorem~\ref{Thm:Parallel} the intercept is the only parameter for $\ell_1$ (while the slope is fixed to be the one of~$\ell$), for Theorem~\ref{Thm:1throughPoint} the only parameter for $\ell_1$ is its slope.
The dual of the lines through the given point~$q$ are exactly the points on the dual line $q^*$ of $q$.
If instead of placing the north pole only in cells intersected by the line $v$, we place it only in cells intersected by the line $q^*$, an algorithm for Theorem \ref{Thm:1throughPoint} follows.

\begin{theorem}
\label{Alg:2D1throughPoint}
Given any three planar point sets $P_1, P_2, P_3$ with a total of $n$ points and a point~$q$, one can find two lines $\ell_1,\ell_2$ such that $\{\ell_1,\ell_2\}$ simultaneously bisects $P_1, \ldots, P_3$ and $\ell_1$ goes through $q$ in time $O(n^{\frac{7}{3}})$.
\end{theorem}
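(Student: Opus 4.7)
The plan is to mirror the proof of Theorem~\ref{Alg:2DParallel}, replacing the linear family of admissible duals with a different but still one-parameter family. Existence of a bisection is guaranteed by Theorem~\ref{Thm:1throughPoint}, so the task is algorithmic. I would again apply a projective transformation that sends the line $\ell_1$ (the one constrained to pass through $q$) to infinity; then $\ell_2$ simultaneously bisects the three transformed point sets $\phi(P_1), \phi(P_2), \phi(P_3)$, and the problem reduces to walking the middle level in a dual line arrangement, as in Theorem~\ref{Alg:2D}.

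Next I would identify the admissible north poles. In the primal, lines through the fixed point $q$ form a one-parameter family (parameterized by slope). In the dual, these lines correspond precisely to the points of the dual line $q^*$. Hence combinatorially different choices of $\ell_1$ correspond exactly to the cells of the dual arrangement that $q^*$ intersects. I would build the dual arrangement of the $n$ points in $P_1 \cup P_2 \cup P_3$ once in $O(n^2)$ time.

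To enumerate the relevant cells efficiently, I would invoke the Zone theorem on the line $q^*$: the zone of $q^*$ in an arrangement of $n$ lines has total complexity $O(n)$, so there are only $O(n)$ combinatorially different placements of $\ell_1$, and they can be traversed in total $O(n)$ time. For each such placement, I walk the corresponding middle level, maintaining the counts of dual lines of each $P_i$ above and below; the starting counts for a new cell can be updated from the previous cell in constant amortized time per edge crossing. By Dey's bound the middle level for any fixed north pole has complexity $O(n^{4/3})$, so each walk costs $O(n^{4/3})$ time. Multiplying by the $O(n)$ cells intersected by $q^*$ yields the claimed bound of $O(n \cdot n^{4/3}) = O(n^{7/3})$.

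The main technical point to verify (rather than a real obstacle, since it is already used implicitly in the previous theorem) is that the middle level remains a well-defined connected cycle in the projective plane when the north pole lies on the one-parameter locus $q^*$, and that crossing from one cell of the zone to an adjacent one changes the middle level by a local update of constant cost per affected edge, so that the amortized cost of maintaining the bisection counters along the traversal is absorbed in the $O(n^{4/3})$ bound per cell.
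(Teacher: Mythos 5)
Your proposal matches the paper's approach exactly: the paper derives this theorem from the algorithm of Theorem~\ref{Alg:2DParallel} by replacing the vertical line $v$ with the dual line $q^*$ of the point $q$, which is precisely what you do, and the Zone theorem and Dey's bound are invoked identically. Your write-up is simply a more explicit version of the same argument.
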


We conclude this section by giving an algorithm for our last result in two dimensions, Theorem~\ref{Thm:2throughPoint}.

\begin{theorem}
\label{Alg:2D2throughPoint}
Given any two planar point sets $P_1$ and $P_2$ with a total of $n$ points and a point $q$, one can find two lines $\ell_1,\ell_2$ such that $\{\ell_1,\ell_2\}$ simultaneously bisects $P_1$ and $P_2$ and both $\ell_1$ and $\ell_2$ go through $q$ in time $O(n\log n)$.
\end{theorem}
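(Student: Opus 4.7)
The plan is to reduce the problem to a one-dimensional necklace-splitting problem on $S^1$ and then solve that problem with a sort-and-sweep. First I would translate so that $q$ lies at the origin, encode every point $p\in P_1\cup P_2$ by its polar angle $\alpha(p)\in[0,2\pi)$, and encode every candidate line through $q$ by its angle $\theta\in[0,\pi)$. For two such lines at angles $\theta_1<\theta_2$, a short case analysis of the four angular sectors around $q$ shows that $R^+$ consists of the two opposite sectors where $p$ is on the same side of both lines, namely those with $\alpha(p)\in(\theta_2,\theta_1+\pi)\cup(\theta_2+\pi,\theta_1+2\pi)$, while $R^-$ is the complementary pair of arcs. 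Applying the doubling map $\alpha\mapsto\beta:=2\alpha\bmod 2\pi$ collapses each antipodal pair of arcs to a single arc of the circle, and one can verify that $p\in R^+$ iff $\beta(p):=2\alpha(p)$ lies on the CCW arc from $2\theta_2$ to $2\theta_1$, while $p\in R^-$ iff $\beta(p)$ lies on the complementary arc. The two cuts $\theta_1,\theta_2$ then correspond to the two points $u=2\theta_1$ and $v=2\theta_2$ on $S^1$.

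Under this reduction, the task becomes: given $n$ two-colored points on $S^1$, find cut points $u,v\in S^1$ such that each of the two resulting arcs contains $\lfloor|P_i|/2\rfloor$ points of color $i$ for $i=1,2$. Existence of such a pair is exactly Theorem~\ref{Thm:2throughPoint} transported through the above correspondence, so what remains is to locate one efficiently.

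I would solve the one-dimensional problem by a two-pointer sweep. First compute $\beta(p)$ for every point and sort, which takes $O(n\log n)$ time and produces a cyclic sequence $b_1,\dots,b_n$ with known colors. Then maintain a pair of gaps $(i,j)$ representing candidate cut positions, together with the counts $(c_1,c_2)$ of each color inside the arc $b_{i+1},\dots,b_j$. Initialize $(i,j)$ so that $c_1=\lfloor|P_1|/2\rfloor$, then repeatedly advance $i$ by one gap (updating the counts to reflect the removed point $b_{i+1}$) and advance $j$ just far enough to restore $c_1=\lfloor|P_1|/2\rfloor$ (updating the counts for each newly included point). After every such step, test whether $c_2=\lfloor|P_2|/2\rfloor$; if so, return the cuts and recover $\theta_1,\theta_2$. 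Since each pointer traverses each gap at most once during the entire process, the sweep performs $O(n)$ constant-time updates, and Theorem~\ref{Thm:2throughPoint} guarantees that a successful step is reached before the sweep completes a full revolution. The overall cost is dominated by the sort, giving $O(n\log n)$.

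The main point of care is not the sweep itself but the treatment of parity and degeneracy: when $|P_i|$ is odd the target is $\lfloor|P_i|/2\rfloor$ in each open arc with the leftover point of $P_i$ lying exactly on a cut, and when two input points are collinear with $q$ through opposite rays they share a doubled angle. Both situations are handled by a standard general-position perturbation (and by allowing a cut to land on a point when $|P_i|$ is odd), after which correctness of the sort-and-sweep follows directly from the existence guarantee of Theorem~\ref{Thm:2throughPoint}.
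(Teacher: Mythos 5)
Your reduction to a circular one-dimensional problem is essentially the paper's: the authors pick a line $\ell$ through $q$ and reflect every point below $\ell$ through $q$, which is exactly your identification of antipodal directions via the doubling map $\alpha\mapsto 2\alpha$. Both versions then sort by angle around $q$ and look for an arc containing the right number of points of each colour. So the reduction step matches, and the sorting accounts for the $O(n\log n)$ bound.

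Where your argument has a gap is the sweep. The paper's key observation is that in any solution the arc must have \emph{fixed} size $\lfloor |P_1|/2\rfloor + \lfloor |P_2|/2\rfloor$, because the two colour counts sum to the arc length. Hence one slides a window of that fixed size; every candidate solution is visited, and existence (Theorem~\ref{Thm:2throughPoint}) guarantees one of the $n$ positions works. Your sweep instead lets the window size vary: for each left endpoint $i$ you take the \emph{smallest} $j$ with $c_1=\lfloor|P_1|/2\rfloor$ and test $c_2$. This visits only $n$ of the $\Theta(n)$ arcs with $c_1$ at its target, while the solution guaranteed by Theorem~\ref{Thm:2throughPoint} may have its $j$ strictly larger than your $j_{\min}(i)$ (with only $P_2$ points in between), and its complement may likewise fail to be of minimal-$j$ form. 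So ``existence of a solution'' does not by itself imply ``some visited pair is a solution'', which is what your last sentence asserts. To repair it you would have to argue, for instance, that the window size $j_{\min}(i)-i$ returns to its initial value over a full revolution, decreases only by unit steps, and attains a value $\ge \lfloor|P_1|/2\rfloor+\lfloor|P_2|/2\rfloor$, hence passes through that value; or simply switch to the paper's fixed-size window, which makes the correctness immediate and avoids the two-pointer machinery altogether.

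Your remark about parity and degeneracy is reasonable, and the paper handles the same issues implicitly by choosing $\ell$ to avoid the input points; this part is fine.
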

\begin{proof}
We know from Theorem \ref{Thm:2throughPoint} that a solution exists. Let $\ell$ be any (non-vertical) line through $q$, not passing through any point in $P=P_1\cup P_2$. For any point $p\in P$ that lies below $\ell$, reflect $p$ at $q$. Clearly, this can be done in constant time for each point, so the overall runtime for this step is $O(n)$. Let $P'$ be the point set obtained this way. The crucial observation is that any solution for $P'$ is also a solution for $P$. Order the points in $P'$ along the radial order around $q$ in $O(n\log n)$ time. It now remains to find an interval $I$ in this sequence of points such that $I$ contains exactly half of the points of each point set. As the size of this interval has to be $|P|/2$, there are only linearly many possible intervals, so it is an easy task to find $I$ in linear time. The runtime of the algorithm is therefore dominated by the sorting step.
\end{proof}

\section{Conclusion}
\label{Sec:General}
We have shown that any four mass distributions in the plane can be simultaneously bisected with two lines. We have also shown that we can put additional restrictions on the used lines, at the cost of one or two mass distributions. All of these results are tight in the sense that there is a way to define more mass distributions that cannot be simultaneously bisected with two lines satisfying the imposed restrictions. Also, all the results are accompanied by non-trivial polynomial time algorithms. It remains open whether these algorithms or their runtime analysis can be improved. Also, it would be interesting to find non-trivial lower bounds for the computational complexity of these problems.
%


Going towards more hyperplanes in higher dimensions, we mention the following conjecture by Stefan Langerman:

\begin{conj}
\label{Conj:General}
Any $n\cdot d$ mass distributions in $\mathbb{R}^d$ can be simultaneously bisected by $n$ hyperplanes.
\end{conj}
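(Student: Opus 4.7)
The plan is to generalize the proof of Theorem~\ref{Thm:2Lines} by replacing conic sections with polynomials of degree $n$. Set $N := \binom{n+d}{d} - 1$ and parameterize nonzero real polynomials in $d$ variables of degree at most $n$, up to positive scaling, by the sphere $S^N$ of unit-norm coefficient vectors. For $p \in S^N$, let $P_p$ denote the corresponding polynomial and set $R^+(p) := \{x \in \mathbb{R}^d : P_p(x) \geq 0\}$ and $R^-(p) := \{x \in \mathbb{R}^d : P_p(x) \leq 0\}$; as in Theorem~\ref{Thm:2Lines}, $R^+(-p) = R^-(p)$. For $i = 1, \ldots, nd$, set $f_i(p) := \mu_i(R^+(p)) - \mu_i(R^-(p))$; these are continuous and antipodal by essentially the same argument given at the end of the proof of Theorem~\ref{Thm:2Lines}.

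To match the dimension of $S^N$, I would define $N - nd$ further continuous antipodal functions $g_1, \ldots, g_{N-nd}$ whose common vanishing forces $P_p$ to factor as a product of $n$ affine linear forms. The locus of such completely reducible polynomials is an algebraic subvariety of $\mathbb{RP}^N$ of dimension exactly $nd$ (it is the image of $(\mathbb{RP}^d)^n$ under the multiplication map), so the codimensions match. With such $g_j$'s in hand, the combined map $F := (f_1, \ldots, f_{nd}, g_1, \ldots, g_{N-nd}) \colon S^N \to \mathbb{R}^N$ is continuous and antipodal, and the Borsuk-Ulam theorem yields a zero $p^* \in S^N$. Vanishing of the $f_i$'s excludes the degenerate possibilities in which $R^+(p^*)$ or $R^-(p^*)$ coincides with all of $\mathbb{R}^d$, so the real linear factors of $P_{p^*}$ yield an arrangement of at most $n$ hyperplanes simultaneously bisecting $\mu_1, \ldots, \mu_{nd}$; since hyperplanes at infinity do not change the regions, one can pad out to exactly $n$.

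The critical and by far hardest step is the construction of the $g_j$'s: they must be continuous on $S^N$, antipodal (which for a polynomial function in the coefficients means odd total degree), and their common zero set must coincide with the Chow variety of completely decomposable degree-$n$ forms. For $n = 2$ a single cubic---the discriminant $A(p)$ from Theorem~\ref{Thm:2Lines}---does the job, which is exactly what the proof of Theorem~\ref{Thm:2Lines} exploits. For $n \geq 3$, however, the Chow variety is considerably more intricate; its known systems of defining equations (such as the Brill-Gordan covariants) are not in general of odd degree, and it is unclear whether an antipodal set-theoretic defining system even exists. A natural alternative is to parameterize the $n$ hyperplanes directly by a point in $(S^d)^n$ and define a single test map to $\mathbb{R}^{nd}$; this is equivariant under the natural $(\mathbb{Z}/2)^n$ action on the source, but the induced action on the target factors through the product-of-signs character, so the classical Borsuk-Ulam theorem no longer suffices, and a refined equivariant argument (ideal-valued index or Euler-class computations on $(S^d)^n$) would be required. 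I expect that one of these two obstructions---algebraic (defining the Chow variety antipodally) or topological (the weak induced action on $(S^d)^n$)---lies at the heart of why the conjecture has resisted a proof beyond the settled cases of Necklace Splitting ($d = 1$), Ham-Sandwich ($n = 1$), and Theorem~\ref{Thm:2Lines} ($n = d = 2$).
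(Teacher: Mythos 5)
You have correctly recognized that this statement is an open conjecture, not a theorem, and the paper offers no proof of it either. Your proposed strategy is in fact precisely the one the paper itself outlines in the Conclusion: parameterize degree-$n$ polynomials in $d$ variables by $S^{\binom{n+d}{d}-1}$, spend $nd$ antipodal functions on the bisection conditions, and spend the remaining $\binom{n+d}{d}-nd-1$ degrees of freedom on antipodal functions whose vanishing forces the polynomial to split into $n$ linear factors. The gap you single out as ``the critical and by far hardest step''---constructing such a defining system of antipodal maps for the locus of completely reducible forms---is exactly what the paper formalizes as Conjecture~\ref{Conj:Degenerate}, and the paper explicitly notes the same obstruction you do: the known defining conditions for this variety either require too many equations or are not antipodal. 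So your analysis does not contain a hidden error; it faithfully reconstructs the paper's own reduction and honestly stops where the paper stops.

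The one place where you go beyond the paper is the alternative parameterization by $(S^d)^n$ and the observation that the resulting test map is only equivariant with respect to a single diagonal $\mathbb{Z}/2$ rather than the full $(\mathbb{Z}/2)^n$, which rules out a naive application of Borsuk--Ulam. This is a correct and worthwhile remark: it identifies the topological side of the difficulty that complements the algebraic side (Conjecture~\ref{Conj:Degenerate}), and indeed the later progress on the conjecture cited in the paper (the references following the Observation) proceeds along equivariant lines of this kind. To be clear, though: neither your argument nor the paper's establishes Conjecture~\ref{Conj:General}; what you have written is a correct conditional reduction plus an accurate diagnosis of why the unconditional statement remains open.
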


\begin{remark}
In the conference version of this paper \cite{conference} the conjecture was not attributed to Langerman. We only found out later, that he already made this conjecture years before the conference version was published.
\end{remark}

For $n=1$ this is equivalent to the Ham-sandwich theorem. Theorem \ref{Thm:2Lines} proves this conjecture for the case $d=n=2$. We first observe that the number of mass distributions would be tight:

\begin{obs}
There exists a family of $n\cdot d+1$ mass distributions in $\mathbb{R}^d$ that cannot be simultaneously bisected by $n$ hyperplanes.
\end{obs}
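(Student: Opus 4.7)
The plan is to place $nd+1$ mass distributions, each concentrated in a tiny ball around a single point, at points in sufficiently general position, and then argue by a pigeonhole count that some ball must lie entirely in a single cell of the arrangement of any $n$ hyperplanes. Concretely, I would take $N:=nd+1$ points $p_1,\dots,p_N$ on the moment curve $\gamma(t)=(t,t^2,\dots,t^d)$, say $p_i=\gamma(i)$; since a hyperplane meets $\gamma$ in at most $d$ points, no $d+1$ of the $p_i$ lie on a common hyperplane. For a small radius $\epsilon>0$ to be fixed below, I would let $\mu_i$ be the uniform distribution on the open ball $B_i:=B(p_i,\epsilon)$.

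The core geometric ingredient is a compactness argument establishing a number $\delta>0$ such that every hyperplane of $\mathbb{R}^d$ is at distance $<\delta$ from at most $d$ of the points $p_i$. I would parametrize hyperplanes as $H_{a,c}=\{x:a\cdot x=c\}$ with $(a,c)\in S^{d-1}\times\mathbb{R}$; if $R$ is a radius of a ball containing all the $p_i$, then any $H_{a,c}$ with $|c|>2R$ is at distance $>R$ from every $p_i$, so it is enough to consider the compact set $S^{d-1}\times[-2R,2R]$. For each $(d+1)$-subset $S\subseteq\{p_1,\dots,p_N\}$, the continuous map $f_S(a,c):=\max_{p\in S}|a\cdot p-c|$ is strictly positive on this compact set, because no $d+1$ of the $p_i$ lie on a common hyperplane; hence $f_S$ admits a positive minimum $\delta_S$, and taking the minimum of $\delta_S$ over the finitely many such $S$ yields the desired $\delta$.

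Now fix $\epsilon<\delta$, also small enough that the $B_i$ are pairwise disjoint, and suppose for contradiction that $n$ hyperplanes $H_1,\dots,H_n$ simultaneously bisect all $\mu_i$. Within each connected component of $\mathbb{R}^d\setminus\bigcup_k H_k$, membership in $R^+$ versus $R^-$ is constant by the cell-colouring description from the Preliminaries, so if the connected ball $B_i$ missed every $H_k$, then $\mu_i(R^+)$ would be $0$ or $\mu_i(\mathbb{R}^d)$, contradicting bisection. Hence every $p_i$ lies within distance $\epsilon<\delta$ of some $H_k$, and by the key claim each $H_k$ has this property for at most $d$ of the $p_i$; hyperplanes at infinity are at distance $>\delta$ from every $p_i$ and do not help. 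Thus at most $nd$ of the points can be accounted for, contradicting $N=nd+1$. The main obstacle is the non-compactness of the hyperplane parameter space in the second paragraph: one has to use the boundedness of the point set to restrict to a compact subset before extracting the uniform lower bound $\delta$; after that, the argument reduces to a clean counting step.
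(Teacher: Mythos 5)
Your proof is correct and follows essentially the same strategy as the paper's: place tiny balls at $nd+1$ points in general position, observe that a hyperplane can be close to at most $d$ of the centers, and conclude by pigeonhole that some ball avoids all $n$ hyperplanes and hence is not bisected. The one noteworthy difference is in how the threshold radius is justified: the paper takes $\epsilon/2$ where $\epsilon$ is the minimum point-to-spanned-hyperplane distance and asserts the intersection bound without further argument, whereas you extract a uniform $\delta>0$ via a compactness argument over the (suitably truncated) space of hyperplanes, which makes the ``small enough radius'' step fully explicit and is somewhat more robust than the paper's terse version.
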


\begin{proof}
Let $P=\{p_1,\ldots,p_{nd+1}\}$ be a finite point set in $\mathbb{R}^d$ in general position (no $d+1$ of them on the same hyperplane). Let $\epsilon$ be the smallest distance of a point to a hyperplane defined by $d$ other points and let $B_i$ be the ball centered at $p_i$ with radius $\frac{\epsilon}{2}$. For each $i\in\{1,\ldots,nd+1\}$ define $\mu_i$ as the volume measure of $B_i$. Note that any hyperplane intersects at most $d$ of the $B_i$'s. On the other hand, for a family of $n$ hyperplanes to bisect $\mu_i$, at least one of them has to intersect $B_i$. Thus, as $n$ hyperplanes can intersect at most $n\cdot d$ different $B_i$'s, there is always at least one $\mu_i$ that is not bisected.
\end{proof}

After the conference version of this paper, the conjecture has been verified for several other values of $n$ and $d$, see \cite{pizza1, pizza2}.

A possible way to prove the conjecture would be to generalize the approach from Section~\ref{Sec:2Cuts} as follows: Consider the $n$ hyperplanes as a highly degenerate algebraic surface of degree $n$, i.e., the zero set of a polynomial of degree $n$ in $d$ variables. Such a polynomial has $k:=\binom{n+d}{d}$ coefficients and can thus be seen as a point on $S^{k-1}$. In particular, we can define $\binom{n+d}{d}-1$ antipodal mappings to $\mathbb{R}$ if we want to apply the Borsuk-Ulam theorem. Using $n\cdot d$ of them to enforce the mass distributions to be bisected, we can still afford $\binom{n+d}{d}-nd-1$ antipodal mappings to enforce the required degeneracies of the surface. There are many conditions known to enforce such degeneracies, but they all require far too many mappings or use mappings that are not antipodal. Nonetheless the following conjecture implies Conjecture~\ref{Conj:General}:

\begin{conj}
\label{Conj:Degenerate}
Let $C$ be the space of coefficients of polynomials of degree $n$ in $d$ variables. Then there exists a family of $\binom{n+d}{d}-nd-1$ antipodal mappings $g_i:C\rightarrow\mathbb{R}$, $i\in\{1,\ldots,\binom{n+d}{d}-nd-1\}$ such that $g_i(c)=0$ for all $i$ implies that the polynomial defined by the coefficients $c$ decomposes into linear factors.
\end{conj}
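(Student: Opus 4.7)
The plan is to construct the required $\binom{n+d}{d}-nd-1$ antipodal mappings as homogeneous polynomials of odd degree in the coefficients, each vanishing on the variety of decomposable polynomials. First, I would identify the target locus: let $W \subset C$ be the cone of polynomials of degree $n$ in $d$ variables that factor as a product of $n$ linear forms (allowing constant factors, to account for hyperplanes at infinity). Each factor contributes $d+1$ parameters, and the $n$ factors may be simultaneously rescaled by $n-1$ independent nonzero scalars without changing the product, so $\dim W = nd+1$; intersecting with the unit sphere $S^{k-1}$, where $k=\binom{n+d}{d}$, gives a subvariety $V$ of codimension exactly $\binom{n+d}{d}-nd-1$, matching the desired number of mappings. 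Crucially, $W$ is invariant under $c\mapsto -c$, since negating $p=\ell_1\cdots\ell_n$ amounts to negating a single factor $\ell_i$.

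This reduces the conjecture to a clean algebraic problem: any homogeneous polynomial $g$ of odd degree in the coordinates of $c$ satisfies $g(-c) = -g(c)$ and is therefore antipodal. Thus it suffices to exhibit $\binom{n+d}{d}-nd-1$ odd-degree polynomials in $c$ whose common vanishing set in $C$ equals $W$ set-theoretically. Candidate equations are known from algebraic geometry---the Chow variety of decomposable forms admits explicit Brill--Gordan-type defining equations, and one may also try discriminants, subresultants, and catalecticant minors of various orders. The base case $n=d=2$ in Theorem~\ref{Thm:2Lines} is encouraging: a single polynomial suffices there, namely the $3\times 3$ determinant $A(p)$, which has odd degree $3$.

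The main obstacle is a two-sided tension. On one hand, $V$ is generally not a complete intersection in $S^{k-1}$, so standard generating sets of its defining ideal have strictly more elements than the codimension, and one must select a subset of exactly $\binom{n+d}{d}-nd-1$ equations that still cut out $V$ set-theoretically. On the other hand, many natural invariants detecting decomposability (such as discriminants and catalecticants) have \emph{even} degree, and are therefore not antipodal; one would need either to multiply them by odd-degree cofactors without enlarging the zero locus, or to construct genuinely odd-degree generators from scratch. A natural route is induction on $n$: use odd antipodal conditions to force the existence of a single linear factor, then recurse on the degree-$(n-1)$ quotient. Executing this peeling by antipodal means alone, while matching the exact codimension budget at each step, appears to be the critical difficulty.
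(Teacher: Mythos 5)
This statement is a \emph{conjecture} in the paper; the authors explicitly present it as open and offer no proof. Indeed, the paragraph preceding it says that all known degeneracy conditions ``require far too many mappings or use mappings that are not antipodal,'' which is exactly the obstacle you identify at the end of your proposal. So there is no proof in the paper against which to compare, and your proposal --- by your own framing --- is a plan, not a proof: you stop at the same wall the authors describe.

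Your dimension count for $W$ and the observation that odd-degree homogeneous polynomials are automatically antipodal are both correct and are sensible heuristics in support of the conjecture, matching the numerology the paper uses to motivate it. However, there is a genuine flaw in the one concrete claim you make, namely that the case $n=d=2$ is already ``settled'' by the single condition $A(p)=0$. Over the reals, $A(p)=0$ only forces the conic to be \emph{degenerate}; it does not force it to split into two real lines. For example, $c_p(x,y)=x^2+y^2$ gives $A(p)=0$ but the zero set is a single point and the polynomial has no real linear factorization; similarly $x^2+1$ is degenerate yet irreducible over $\mathbb{R}$. The paper's proof of Theorem~\ref{Thm:2Lines} does not claim otherwise: it rules out these ``imaginary'' degenerate cases by invoking the simultaneous vanishing of the mass-balance functions $f_1,\ldots,f_4$, i.e.\ information that is not available inside the purely algebraic statement of Conjecture~\ref{Conj:Degenerate}. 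So even the $n=d=2$ instance of the conjecture is not verified by $A(p)$ alone, and any eventual argument would need either a different antipodal map or, more likely, a reformulation of the conjecture that permits using the bisection conditions to exclude the non-real degenerations --- which is in fact closer to how the paper's Section~\ref{Sec:2Cuts} actually works.

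The remaining difficulties you flag (the decomposable locus is not a complete intersection; natural invariants such as discriminants and catalecticants have even degree; peeling off one linear factor at a time within the codimension budget) are real and are precisely why the statement is still a conjecture. Identifying them is valuable, but none of them is overcome here.
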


\section{Acknowledgments}
We would like to thank Stefan Langerman for bringing this problem to our attention, and Johannes Schmitt, Johannes Lengler and Daniel Graf for the fruitful discussions. 
We also want to thank the organizers of CCCG 2017 for a wonderful conference.


\small
\bibliographystyle{abbrv}

\bibliography{refs}




\end{document}